
\documentclass[%
reprint,
 amsmath,amssymb,
 aps,
 prl,showpacs
]{revtex4-1}

\usepackage{graphicx}
\usepackage{dcolumn}
\usepackage{bm,bbm}
\usepackage{braket}
\usepackage{amsthm,amsmath,amssymb,amsbsy}
\usepackage{hyperref}
\usepackage{times,txfonts}

\newtheorem{theorem}{Theorem}
\newcommand{\norm}[1]{\lVert#1\rVert}

\DeclareMathOperator{\SL}{SL}
\addtolength{\textheight}{3mm}

\begin{document}

\title{$\mbox{\!\!Entanglement quantification made easy: Polynomial measures invariant under convex decomposition}$}

\author{Bartosz Regula}
\affiliation{$\mbox{School of Mathematical Sciences, The University of Nottingham, University Park,
Nottingham NG7 2RD, United Kingdom}$}
\author{Gerardo Adesso}
\affiliation{$\mbox{School of Mathematical Sciences, The University of Nottingham, University Park,
Nottingham NG7 2RD, United Kingdom}$}

\date{\today}

\begin{abstract}
Quantifying entanglement in composite systems is a fundamental challenge, yet exact results are only available in few special cases. This is because hard optimization problems are routinely involved, such as finding the convex decomposition of a mixed state with the minimal average pure-state entanglement, the so-called convex roof. We show that under certain conditions such a problem becomes trivial. Precisely, we prove by a geometric argument that polynomial entanglement measures of degree $2$ are independent of the choice of pure-state decomposition of a mixed state, when the latter has only one pure unentangled state in its range. This allows for the analytical evaluation of convex roof extended entanglement measures in classes of rank-two states obeying such condition. We give explicit examples for the square root of the three-tangle in three-qubit states, and show that several representative classes of four-qubit pure states have marginals that enjoy this property.
\end{abstract}

\pacs{03.67.Mn, 03.65.Ud}
\maketitle

Entanglement is an emblem of quantum mechanics and the most important component for a broad spectrum of quantum technologies \cite{horodecki_2009,amico_2008}. The more a quantum state is entangled, the better will it perform in an information processing and communication task, compared to any unentangled state \cite{wootters_1998,piani_2009,boixo_2008}. Quantifying entanglement exactly is therefore a significative requirement to develop a rigorous assessment of nonclassical enhancements in realistic applications \cite{eltschka_2014}. With the advent of quantum information theory in the last two decades, a sound machinery has been developed for the characterization and quantification of entanglement as a resource \cite{vedral_1997,wootters_1998,horodecki_2009,virmani_2007,eltschka_2014}.

An entanglement measure $E(\ket{\psi})$ defined on pure quantum states $\ket{\psi}$ is a positive real function which is 0 iff $\ket{\psi}$ is separable. Additionally, every such measure has to be an \textit{entanglement monotone}, that is, it cannot increase on average under local operations and classical communication (LOCC) \cite{vedral_1997}. One of the difficulties of quantifying entanglement lies in the fact that entanglement measures as defined above do not typically admit an easy way to extend their scope to all mixed quantum states. The so-called \textit{convex roof} of a measure of entanglement $E(\ket{\psi})$ is obtained by finding the largest convex function on the set of mixed states which corresponds to $E$ on pure states \cite{uhlmann_1998, uhlmann_2010}. One can use this construction to define the extension of an entanglement measure $E$ to mixed states $\rho$ as
\begin{equation}\label{eq:convexroof}E(\rho) = \min_{\{p_i, \ket{\psi_i}\}} {\sum}_i p_i E(\ket{\psi_i})\,,\end{equation}
where the minimization is performed over all sets $\{p_i, \ket{\psi_i}\}$ such that $\sum_i p_i \ket{\psi_i}\bra{\psi_i} = \rho$, that is, all convex decompositions of $\rho$ into pure states, with normalized weights  $\sum_i p_i = 1$. With this definition, $E$ is guaranteed to remain an entanglement monotone over mixed states as well \cite{vidal_2000}. It is however a formidable problem to find the optimal decomposition in Eq.~(\ref{eq:convexroof}), hence to obtain an analytical form for $E(\rho)$ in general \cite{gurvits_2003,huang_2014,toth_2015}. Notable cases in which this has been accomplished include the evaluation of the entanglement of formation in all two-qubit states (in terms of concurrence) \cite{hill_1997,wootters_1997}, in highly symmetric two-qudit states \cite{terhal_2000}, and in symmetric two-mode Gaussian states \cite{giedke_2003}, as well as the computation of the three-tangle \cite{coffman_2000} in particular families of three-qubit mixed states \cite{lohmayer_2006,eltschka_2008,jung_2009,siewert_2012,viehmann_2012,eltschka_2012_srep}.

A particularly useful class of functions to consider for entanglement quantification are the \textit{polynomial invariants}, that is, polynomial functions in the coefficients of a pure state $\ket{\psi}$ which are invariant under stochastic  LOCC (SLOCC). For a system of $m$ qudits, a polynomial invariant of homogeneous degree $h$ is therefore a function $P$ which satisfies
\begin{equation}P(\kappa\, L \ket{\psi}) = \kappa^h\, P(\ket{\psi})\,,\label{eq:sleq}\end{equation}
for a constant $\kappa >0$ and an invertible linear operator $L\in \SL(d,\mathbb{C})^{\otimes m}$ representing the SLOCC transformation \cite{dur_2000}. The absolute value of any such polynomial with $h\leq 4$ defines in fact an entanglement monotone \cite{verstraete_2003, eltschka_2012}. Two common monotones, the concurrence for two qubits \cite{hill_1997} and the three-tangle for three \cite{coffman_2000}, are obtained in this way. Out of all possible homogeneous degrees $h$,  degree $2$ is of particular significance, as only then the SLOCC invariance of a polynomial entanglement measure $E(\ket{\psi})$ extends to its convex roof $E(\rho)$ \cite{viehmann_2012}.

One can relate the properties of entanglement measures with the geometric representation of quantum states on a hypersphere  (a generalization of the Bloch sphere) by considering the so-called {\em zero polytope} \cite{lohmayer_2006,osterloh_2008}. For a given state $\rho$ and an entanglement measure $E$, it is defined as the convex hull of all pure states (spanned by the eigenvectors of $\rho$) with vanishing $E$. Since entanglement vanishes for any convex mixture of such states but will never vanish for a state lying outside of the convex hull, the zero polytope gives a useful visual representation of the region of the hypersphere with zero entanglement. Various methods for constructing bounds to polynomial entanglement measures in mixed states rely on finding  states within the zero polytope and using them to form suitable convex combinations with states outside of it \cite{lewenstein_1998, acin_2001, rodriques_2014}.

In this Letter, we analyze the particular situation when the zero polytope for a given state $\rho$ is reduced to a single point, that is, when there is only one state in the range of $\rho$ with vanishing entanglement. Our investigation is naturally specialized to rank-$2$ states $\rho$, which admit an intuitive geometrical representation on a Bloch sphere, and represent ideal testbeds to analyze structural properties of multipartite entanglement in mixed states. We show that, when the zero polytope reduces to a point, any polynomial entanglement measure of degree 2 simply corresponds to a measure of distance on the Bloch sphere.  This property in turn renders the value of the measure {\em independent of the decomposition} of $\rho$ into pure states. Therefore, the convex roof extension of the entanglement measure for  $\rho$ becomes trivial, and
can be evaluated analytically in any decomposition, e.g.~in the spectral one. 

Although geometric methods have long been known to be valuable tools in quantum information theory \cite{bengtsson_2007}, this surprising result relies only on classical Euclidean geometry, which does find use in the study of quantum correlations \cite{jevtic_2014,milne_2014}, but whose specific application to entanglement quantification went seemingly unnoticed so far.

We begin with the formal definition of the zero polytope. Given a state $\rho$ of rank $\eta$ and an entanglement measure $E(\ket{\psi})$ which is the absolute value of a polynomial of degree $h$ in the coefficients of the pure state $\ket{\psi}$, we can write the equation
\begin{equation}\label{eq:polytope}
E\left( {\sum}_{j=0}^{\eta-1} \omega_j \ket{\phi_j} \right) =0\,,\end{equation}
where $\omega_j$ are complex coefficients and $\ket{\phi_j}$ are the eigenvectors of $\rho$. The zero polytope is then defined to be the convex hull of all pure states which satisfy the above equation \cite{osterloh_2008}.

Noting that the expression in Eq.~\eqref{eq:polytope} is in fact a polynomial of degree $h$ in the coefficients $\omega_j$, we are interested in the case when the polynomial has a unique root in $\omega_j$, that is, there is only one such state $\ket{z}$ defined as a linear combination of the eigenvectors $\{\ket{\phi_j}\}$ such that $E(\ket{z})$ vanishes. States $\rho$ for which this happens will be labeled as {\em one-root} states (shorthand for ``one root to rule them all''). Since multivariate complex polynomials have uncountably infinite sets of solutions and there is no straightforward method to investigate their roots \cite{osterloh_2008}, our analysis is limited to $\eta=2$. In this case we can represent the rank-$2$ state $\rho$ as a point (or Bloch vector) $\mathbf r \in \mathbb{R}^3$ in the standard Bloch sphere, with polar points corresponding to the eigenvectors $\ket{\phi_0}$ and $\ket{\phi_1}$ of $\rho$, see Fig.~\ref{fig:sphere}. We then have, up to normalization, that the root state can be written as $\ket{z} = \ket{\phi_0} + z \ket{\phi_1}$ (assuming $E(\ket{\phi_1}) \neq 0$) for some $z \in \mathbb{C}$, while for any pure state $\ket{\omega} = \ket{\phi_0} + \omega \ket{\phi_1}$ one has
\begin{equation}\label{eq:distance}
E\big( \ket{\phi_0} + \omega \ket{\phi_1} \big) = N \left|\omega - z\right|^h\,,\end{equation}
where $N$ is a normalization factor. We note that for $h=2$ this expression is proportional to the squared Euclidean distance   $\norm{\boldsymbol{\omega} - \mathbf z}^2$ between the Bloch vectors associated to $\ket{\omega}$ and $\ket{z}$. This interpretation of the entanglement measure as a \textit{de facto} measure of distance allows us to employ the geometrical properties of Euclidean spaces to investigate the behaviour of $E$ --- which from now on will precisely denote a polynomial entanglement measure of degree 2 --- on an arbitrary {\it mixed} state $\rho$ with Bloch vector $\mathbf{r}$ inside the sphere. We are thus ready to present our central result, which establishes a geometric relation for $E$ across all possible decompositions of $\rho$.

\begin{theorem}\label{geothm}Consider an $n$-sphere with radius $R$ and center located at $\mathbf{o}$. We will indicate by $\{\lambda_i, \mathbf p_i\}$ a finite set of points $\mathbf p_i \in \mathbb{R}^{n+1}$ on the sphere with corresponding weights $\lambda_i$, normalized so that $\sum_i \lambda_i = 1$. Let us choose a particular point $\mathbf z$ on the sphere and denote by $\{\alpha_i, \mathbf{a}_i\}$ a set such that all points $\{\mathbf{a}_i\}$ are equidistant from $\mathbf z$, that is $\|{\mathbf z- \mathbf{a}_i}\| = \|{\mathbf z- \mathbf{a}_k}\| \;\forall\, i,k$, according to the standard Euclidean distance. Let $\mathbf{g} $ denote the  barycenter of the family of points, that is,
$\mathbf{g}  = \sum_i \alpha_i \mathbf{a}_i$.
Then, for any other set of points $\{\beta_j, \mathbf{b}_j\}$ which lie on the same sphere and share the same barycenter $\mathbf{g}$, the following holds:
\begin{equation}\label{invariance}
\mbox{$\sum_j \beta_j \|{\mathbf z - \mathbf{b}_j}\|^2 = \|{\mathbf z- \mathbf{a}_l}\|^2$, \quad $\forall \  \mathbf{a}_l \in \{\mathbf{a}_i\}$}.
\end{equation}
\end{theorem}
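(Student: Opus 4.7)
The plan is to exploit the fact that the weighted sum of squared Euclidean distances from a fixed point to a set of points lying on a sphere depends \emph{only} on the barycenter of that set. Once this is shown, the theorem follows because both families $\{\alpha_i,\mathbf{a}_i\}$ and $\{\beta_j,\mathbf{b}_j\}$ share the same barycenter $\mathbf{g}$, so the two weighted sums must agree, and the equidistance condition collapses the $\{\mathbf{a}_i\}$-sum to a single term.

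First I would expand
\begin{equation*}
\sum_j \beta_j \|\mathbf{z}-\mathbf{b}_j\|^2 = \|\mathbf{z}\|^2 - 2\,\mathbf{z}\cdot\sum_j\beta_j\mathbf{b}_j + \sum_j \beta_j \|\mathbf{b}_j\|^2,
\end{equation*}
using $\sum_j\beta_j=1$ and $\sum_j\beta_j\mathbf{b}_j=\mathbf{g}$. The middle term is already fixed by $\mathbf{g}$; the work is in the last term. Here I invoke the sphere constraint $\|\mathbf{b}_j-\mathbf{o}\|^2=R^2$, which rearranges to $\|\mathbf{b}_j\|^2 = R^2 - \|\mathbf{o}\|^2 + 2\,\mathbf{o}\cdot\mathbf{b}_j$, an \emph{affine} function of $\mathbf{b}_j$. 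Taking the $\beta_j$-weighted sum turns it into $R^2-\|\mathbf{o}\|^2+2\,\mathbf{o}\cdot\mathbf{g}$, again depending only on $\mathbf{g}$. Therefore the entire weighted sum of squared distances is a fixed function of $\mathbf{z}$, $\mathbf{o}$, $R$, and $\mathbf{g}$.

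Next I would apply the very same expansion to the set $\{\alpha_i,\mathbf{a}_i\}$, which also lives on the sphere and has barycenter $\mathbf{g}$ by hypothesis; this gives
\begin{equation*}
\sum_i \alpha_i \|\mathbf{z}-\mathbf{a}_i\|^2 \;=\; \sum_j \beta_j \|\mathbf{z}-\mathbf{b}_j\|^2.
\end{equation*}
Finally, the equidistance assumption $\|\mathbf{z}-\mathbf{a}_i\|=\|\mathbf{z}-\mathbf{a}_l\|$ for every $i$, combined with $\sum_i\alpha_i=1$, makes the left-hand side collapse to $\|\mathbf{z}-\mathbf{a}_l\|^2$, establishing Eq.~(\ref{invariance}).

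There is no real obstacle here, only a choice of framing: the single nontrivial observation is that on a sphere the quadratic function $\mathbf{x}\mapsto\|\mathbf{x}\|^2$ becomes affine, which is essentially a parallel-axis identity. Everything else is bookkeeping. The mild subtlety I would flag is that the statement does \emph{not} require the $\mathbf{b}_j$ (or the barycenter $\mathbf{g}$) to lie in any particular region — only on the same sphere and with a shared centroid — so the argument applies to arbitrary convex decompositions of the corresponding Bloch vector, which is exactly what is needed for the entanglement-quantification application.
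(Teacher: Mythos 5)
Your proof is correct and rests on the same two ingredients as the paper's own argument: the sphere constraint turning $\|\mathbf{x}\|^2$ into an affine function of $\mathbf{x}$, and the shared barycenter fixing the resulting weighted sums. The paper merely packages the expansion differently --- via Apollonius' (parallel-axis) formula about $\mathbf{g}$ followed by a separate matching of the two moment terms $\sum_i \alpha_i \norm{\mathbf{g}-\mathbf{a}_i}^2$ and $\sum_j \beta_j \norm{\mathbf{g}-\mathbf{b}_j}^2$ --- whereas your direct expansion about the origin reaches the same conclusion in one pass.
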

\begin{proof}
Apollonius' formula \cite{berger_1987} says that for any set of points $\{\alpha_i, \mathbf{a}_i\}$ with barycenter $\mathbf{g} $ and any chosen point $\mathbf z \in \mathbb{R}^{n+1}$ we can write: $
\sum_i \alpha_i \norm{\mathbf z- \mathbf{a}_i}^2 = \norm{\mathbf z-\mathbf{g} }^2 + \sum_i \alpha_i \norm{\mathbf{g}  - \mathbf{a}_i}^2$,
but since we chose $\{\alpha_i, \mathbf{a}_i\}$ to be equidistant from $\mathbf z$, we get
\begin{equation}\label{rho}
\mbox{$\norm{\mathbf z-\mathbf{g} }^2 =  \norm{\mathbf z-\mathbf{a}_l}^2 - \sum_i \alpha_i \norm{\mathbf{g} -\mathbf{a}_i}^2$},
\end{equation}
for any chosen $\mathbf{a}_l$. Applying the same formula to any other set of points $\{\beta_j, \mathbf{b}_j\}$  with the same barycenter $\mathbf{g} $ gives:
\begin{align}
&\mbox{$\sum_j \beta_j \norm{\mathbf z-\mathbf{b}_j}^2 = \norm{\mathbf z-\mathbf{g} }^2 + \sum_j \beta_j \norm{\mathbf{g} -\mathbf{b}_j}^2$} \nonumber \\
\label{almost} &=\mbox{$\norm{\mathbf z-\mathbf{a}_l}^2 - \sum_i \alpha_i \norm{\mathbf{g} -\mathbf{a}_i}^2 + \sum_j \beta_j \norm{\mathbf{g} -\mathbf{b}_j}^2$},\end{align}
by Eq.~(\ref{rho}). Since the points $\{\alpha_i, \mathbf{a}_i\}$ lie on the $n$-sphere, we get
$R^2 = \norm{\mathbf{a}_i - \mathbf{o}}^2 = \norm{\mathbf{a}_i}^2 - 2 \mathbf{a}_i \cdot \mathbf{o} + \norm{\mathbf o}^2$
for each $\mathbf{a}_i$, where $\cdot$ denotes the standard Euclidean inner product. We can average over this expression with the weights $\{\alpha_i\}$ to obtain
$R^2 = \big(\sum_i \alpha_i \norm{\mathbf{a}_i}^2\big) - 2 \left( \sum_i \alpha_i \mathbf{a}_i \right) \cdot \mathbf o + \norm{\mathbf o}^2$, that is,
$\sum_i \alpha_i \norm{\mathbf{a}_i}^2 = R^2 + 2 \mathbf{g}  \cdot \mathbf o - \norm{\mathbf o}^2$.
Since the points $\{\beta_j, \mathbf{b}_j\}$ lie on the same $n$-sphere and share the same barycenter $\mathbf{g}$, we can conclude that
$\sum_i \alpha_i \norm{\mathbf{a}_i}^2 = \sum_j \beta_j \norm{\mathbf{b}_j}^2$,
and therefore
$\sum_i \alpha_i \norm{\mathbf{g} -\mathbf{a}_i}^2 = \sum_j \beta_j \norm{\mathbf{g} -\mathbf{b}_j}^2$.
Using this in Eq.~(\ref{almost}) gives the final result announced in Eq.~(\ref{invariance}).
\end{proof}

Theorem~\ref{geothm} implies that, for every one-root state $\rho$ with Bloch vector $\mathbf{r}$ (identifying $\mathbf{r}$ with the barycentre $\mathbf{g}$ in the statement of the Theorem), any polynomial entanglement measure $E$ of degree 2 has the same value irrespective of the chosen decomposition of $\rho$ into a set of $\nu \geq 2$ pure states. That is, the measure $E$ is an {\em affine} function on the whole Bloch sphere,
\begin{equation}\label{affine}\mbox{$E\left({\sum}_i p_i \ket{\psi_i}\bra{\psi_i}\right) = {\sum}_i p_i E\left(\ket{\psi_i}\right)$}\,, \quad \forall\ \{p_i, \ket{\psi_i}\}_{i=0}^{\nu-1}\,.\end{equation}
The evaluation of $E(\rho)$ is thus {\it made easy}, and can be carried out exactly in any decomposition. It is particularly instructive to consider a decomposition of $\rho$ such that all pure states $\{\ket{\psi_i}\}$ with Bloch vectors $\{\mathbf{a}_i\}$ (in the notation of Theorem~\ref{geothm}) lie on the  secant plane equidistant from the root point $\mathbf z$, see Fig.~\ref{fig:sphere}. The value of $E(\rho)$ then corresponds to the squared distance from any of these points $\mathbf{a}_i$ to $\mathbf z$, according to Eq.~(\ref{eq:distance}). But since this can be done for any other state with a Bloch vector lying in the same plane, the equidistant plane is in fact a plane of constant entanglement.

Let us now introduce the radial state $\rho_c$ whose  vector $\mathbf{c}$ is at the centre of the secant plane, i.e.~at the intersection between the plane and the Bloch sphere diameter joining $\mathbf z$ with the antipodal point $\mathbf{z'}$. The latter point corresponds to the pure state $\ket{z'}$ with maximal entanglement $E$ on the sphere. As $\rho$ and $\rho_c$ are on the same secant plane, one has $E(\rho) = E(\rho_c)$. The latter can be evaluated by exploiting the affinity of $E$, Eq.~(\ref{affine}), and taking the decomposition  $\rho_c = \frac{1}{2}\norm{\mathbf{c}- \mathbf{z'}} \ket{z}\bra{z} + \frac{1}{2}\norm{\mathbf{c} - \mathbf{z}} \ket{z'}\bra{z'}$.  Since $E(\ket{z})=0$, we finally get, for any one-root state $\rho$ in the Bloch sphere, the closed formula
\begin{align}\label{eq:entdistance}
E(\rho) = E(\rho_c) &= \mbox{$\frac{1}{2}$}\norm{\mathbf{c} -\mathbf z}\, E(\ket{z'}) =D_{\rm Tr}( \rho_c,   \ket{z}\!\bra{z}) \, E(\ket{z'}),
\end{align}
where one recognizes the trace distance $D_{\rm Tr}(\rho, \tau) = \frac12 {\rm Tr} | \rho-\tau|$ in the second expression.
The connection with Eq.~(\ref{eq:distance}) is made explicit by using elementary Euclidean geometry, which yields $E(\rho) = E(\ket{\psi_l}) = N \| \mathbf{a}_l - \mathbf{z} \|^2 = 2N \|\mathbf{c}-\mathbf{z}\| = E(\rho_c)$\,,
for any index $l \in \{0,\ldots,\nu-1\}$. Comparing with Eq.~(\ref{eq:entdistance}) we find the value of  the normalization constant, $N = E(\ket{z'})/4$.

\begin{figure}[t]
\centering
\includegraphics[width=7.5cm]{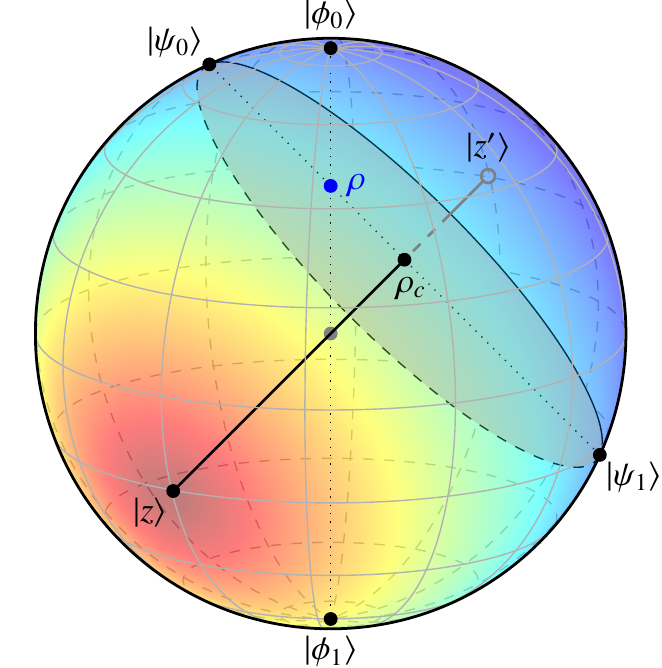}
\caption{(Color online) The entanglement of a rank-2 one-root state $\rho$, with detail of an equidistant decomposition  $\rho = p \ket{\psi_0}\bra{\psi_0} + (1-p) \ket{\psi_1}\bra{\psi_1}$, visualized on the Bloch sphere with the polar states $\{\ket{\phi_0}, \ket{\phi_1}\}$ being the eigenvectors of $\rho$. The flat shaded surface is the plane of constant entanglement for $\rho$ and the radial state $\rho_c$. The coloring on the sphere corresponds to the value of entanglement, which amounts to the distance from the root state $\ket{z}$, with red meaning vanishing entanglement, and blue meaning maximum entanglement, reached on the antipodal state $\ket{z'}$. See text for further details.}\label{fig:sphere}
\end{figure}

To summarize, given two orthonormal states $\{\ket{\phi_0},\,\ket{\phi_1}\}\in \mathbb{C}^{2^m}$ of a $m$-qubit system, one can construct a family of rank-$2$ mixed states $\rho$ defined as
\begin{equation}\label{eq:blochvec}
\rho = \mbox{$\sum_{i,j=0}^{1} B_{ij} \ket{\phi_i}\bra{\phi_j}$}\,,
\end{equation}
with $B = \frac{1}{2} \left(\mathbbm{1} + \mathbf{r} \cdot \boldsymbol{\sigma} \right)$,
where $\boldsymbol{\sigma} = (\sigma_x, \sigma_y, \sigma_z)$ denotes the Pauli matrices and
$\mathbf{r} = r \  (\sin\theta \cos\varphi, \ \sin\theta \sin\varphi, \ \cos\theta)^T$ the Bloch vector of $\rho$, with $0 \leq r \leq 1$, $0 \leq \theta \leq \pi$, and $0 \leq \varphi \leq 2\pi$.
Given now a polynomial entanglement measure $E$ of degree $2$, then if $E(\ket{\phi_0} + z \ket{\phi_1})=0$ has only one root $z$, we say that any state $\rho$ of Eq.~(\ref{eq:blochvec}) has the one-root property, which implies (by Theorem~\ref{geothm}) that its convex roof extended entanglement measure $E(\rho)$ is independent of the pure-state decomposition of $\rho$ and exactly computable via Eq.~(\ref{eq:entdistance}).

In order to present concrete examples, we begin by investigating the concurrence ${\cal C}$ of two qubits \cite{hill_1997}, for which general analytical expressions are known \cite{hill_1997,wootters_1997} and can be  compared with the results presented here. Visualizing Fig.~\ref{fig:sphere}, we can always apply a change of basis to rotate the Bloch sphere such that the north and south poles are occupied by the root $\mathbf{z}$ and the antipodal point $\mathbf{z'}$, respectively. We can further identify the root state $\ket{z}$ with the computational product state $\ket{00}$. By imposing  $\langle z|z'\rangle=0$ and that  the concurrence vanish only on $\ket{z}$, we obtain a complete characterization of two-qubit one-root states (up to local unitaries), specified by the basis:
$\ket{\phi_0} \equiv \ket{z} = \ket{00}$ and
$\ket{\phi_1} \equiv \ket{z'} = \cos\left(\frac{\gamma}{2}\right) \ket{01} +  \sin\left(\frac{\gamma}{2}\right) e^{i \delta} \ket{10}$, with $0 \leq \gamma \leq \pi$, $0 \leq \delta \leq 2\pi$. All horizontal planes crossing the ball are surfaces of constant concurrence. For any two-qubit state $\rho$ inside the sphere, defined as in Eq.~(\ref{eq:blochvec}), the concurrence
computed from Eq.~\eqref{eq:entdistance} is ${\cal C}\left(\rho\right) = \mbox{$\frac{1}{2}$}\big| 1 - \braket{\phi_0 | \rho | \phi_0} + \braket{\phi_1 | \rho | \phi_1}\big|\ {\cal C}\left(\ket{\phi_1}\right)= \mbox{$\frac{1}{2}$}\left(1- r \cos\theta\right) \sin\gamma$,
which coincides with the known general solution from \cite{hill_1997}.

We now focus on the case of three qubits ($m=3$) and adopt the square root of the three-tangle $\sqrt{\cal T}$ \cite{coffman_2000,viehmann_2012} as a polynomial measure of tripartite entanglement, whose explicit formula for pure states $\ket{\psi} \in \mathbb{C}^8$ is provided in \cite{coffman_2000}. Such a measure plays a prominent role in studies of monogamy of entanglement \cite{coffman_2000, regula_2014}, yet at present no closed solution exists in general for its evaluation on mixed states, beyond a few special cases \cite{lohmayer_2006,eltschka_2008,jung_2009,siewert_2012,viehmann_2012,eltschka_2012_srep}.
We can readily construct a representative family of one-root rank-2 states of three qubits in a similar way as for two qubits.
We take the poles of the Bloch sphere to be, respectively, the generalized $W$ state
$\ket{\phi_0} \equiv \ket{z} =  a \ket{001} + b \ket{010} + c \ket{100}$
with $a^2+b^2+c^2=1$ (where $a,b,c$ are chosen real for ease of illustration), and the entangled state $\ket{\phi_1} \equiv \ket{z'} =  g \ket{000} + t_1 \ket{011} + t_2 \ket{101} + t_3 \ket{110} + e^{i\gamma}h \ket{111}$ with $g^2+h^2+\sum_i t_i^2 = 1$, $g \geq t_i$, $h \geq 0$, and $-\frac{\pi}{2}\leq \gamma \leq \frac{\pi}{2}$, as defined in \cite{tamaryan_2009}. Imposing the one-root property leads to  $h=0$ and $t_3 =(\sqrt{c t_1} + \sqrt{b t_2})^2/a$. We can then write any state $\rho$ inside the Bloch sphere as in Eq.~\eqref{eq:blochvec}, which leads us to the exact expression for the square root of three-tangle $\sqrt{\cal T}$ of $\rho$,
\begin{align}\sqrt{\cal T}\left(\rho\right) &= \mbox{$\frac{1}{2}$}\big| 1 - \braket{\phi_0 | \rho | \phi_0} + \braket{\phi_1 | \rho | \phi_1}\big|\ \sqrt{\cal T}\left(\ket{\phi_1}\right)\\
&= \sqrt{\left| \frac{g t_1 t_2}{a^9} \right|} \left| \sqrt{c t_1} + \sqrt{b t_2}\right| \left| 1 - r\cos\theta \right| \nonumber \\
& \quad \times \bigg|a^4+\left[\left(\sqrt{c t_1} + \sqrt{b t_2}\right)^4 + a^2 \left(g^2 + t_1^2 + t_2^2\right)\right]^2\bigg|.\nonumber \end{align}
Tripartite entanglement in this $7$-parameter class of three-qubit states $\rho$ has thus been effortlessly quantified, thanks to their one-root property and its geometric implications.

Beyond specific examples, one can wonder whether a more systematic characterization of one-root three-qubit states is possible, so as to gauge the relevant range of applicability of our exact results. The answer is affirmative. Notice that every rank-$2$ three-qubit state $\rho$ can be purified to a four-qubit state $\ket{\Psi} \in \mathbb{C}^{16}$, and conversely the set of marginals obtained by tracing out one qubit from arbitrary four-qubit pure states $\ket{\Psi}$ completely characterizes the set of rank-$2$ three-qubit states $\rho$. We can then aim to identify the one-root three-qubit states in terms of their four-qubit purifications. To this end, we recall that while pure four-qubit states have an infinite number of SLOCC-inequivalent classes \cite{dur_2000}, that is, subsets of states which cannot be transformed into one another by performing SLOCC operations, from the point of view of classifying their entanglement properties they can in fact be conveniently grouped into $9$ classes \cite{verstraete_2002}. Each of these classes then forms a subset $\Upsilon_\mu \subset \mathbb{C}^{16}$ (for $\mu=1,\ldots,9$) represented by a generating family $\ket{G_\mu}$ (dependent on at most four continuous complex parameters), such that all the states $\ket{\Psi_\mu} \in \Upsilon_\mu$ belonging to the $\mu^{\text{th}}$  class are constructed as $\ket{\Psi_\mu} = L \ket{G_\mu} / \|L \ket{G_\mu}\|$, where $L \in \SL(2, \mathbb{C})^{\otimes 4}$ is a SLOCC operation. The union of all the nine classes $\bigcup_{\mu=1}^9 \Upsilon_\mu$ covers the Hilbert space of {\em all} four-qubit pure states, up to permutations of the qubits.

We now make a useful observation.
When two pure states are SLOCC-equivalent, the ranges of their corresponding reduced subsystems are spanned by SLOCC-equivalent bases, which means that all states in the reduced ranges are related by an invertible linear transformation \cite{chen_2006,*chen_2006-1}. Since for any two SLOCC-equivalent states  the polynomial entanglement measure $E$ either vanishes on both or is strictly nonzero on both \cite{viehmann_2012}, we have that SLOCC operations preserve the number of zero-$E$ states in the ranges of the reduced subsystems. In other words, the number of roots in the zero polytope for the marginals of four-qubit states is a SLOCC-invariant.

It  then suffices to check the marginals of the generators $\ket{G_\mu}$ to look for the one-root property. This can be done analytically after some straightforward algebra, and we find as a result that {\em four} of the nine classes of four-qubit pure states have three-qubit marginals which can enjoy the one-root property. This, applied to the whole respective SLOCC classes \cite{verstraete_2002}, characterizes completely the set of  three-qubit one-root states, and entails that for all these rank-$2$ mixed states $\rho$ we can exactly calculate the convex roof extended entanglement measure $\sqrt{\cal T}$ thanks to Theorem~\ref{geothm}, which is remarkable.

Explicitly, the classes whose marginals are generally one-root are: class $4$ (tracing out any one of the four qubits), class $5$ (tracing out qubit $2$ or qubit $4$ only), and classes $7$ and $8$ (tracing out qubit $2$, $3$, or $4$ only).
The corresponding sets of three-qubit one-root states are given therefore by
\begin{align}\label{eq:rhoslocc}
&\rho^L_{\mu,k} = {\rm Tr}_k \left[\frac{L \ket{G_\mu}\!\bra{G_\mu}L^\dagger}{{\rm Tr}(L \ket{G_\mu}\!\bra{G_\mu}L^\dagger)}\right]\,, \quad \forall\ L \in \SL(2, \mathbb{C})^{\otimes 4}\,, \nonumber \\
& \qquad \quad \mbox{and }\ \forall\ k \in \left\{
                  \begin{array}{ll}
                    \{1,2,3,4\}, & \hbox{if $\mu = 4$;} \\
                    \{2,4\}, & \hbox{if $\mu=5$;} \\
                    \{2,3,4\}, & \hbox{if $\mu = 7$ or $8$.} \\
                  \end{array}
                \right.
\end{align}
For completeness, we report the relevant (unnormalized) generators:
$\ket{G_4}=a(\ket{0000}+\ket{1111}) + \frac{a+b}{2}(\ket{0101}+\ket{1010}) + \frac{a-b}{2}(\ket{0110}+\ket{1001}) + \frac{i}{\sqrt{2}}(\ket{0001} + \ket{0010} + \ket{0111} + \ket{1011}) \equiv \ket{L_{ab_3}}$,
$\ket{G_5}=  a(\ket{0000} + \ket{0101} + \ket{1010} + \ket{1111}) +i\ket{0001} + \ket{0110} - i\ket{1011} \equiv \ket{L_{a_4}}$,
$\ket{G_7}= \ket{0000} + \ket{0101} + \ket{1000} + \ket{1110} \equiv \ket{L_{0_{5\oplus\bar{3}}}}$, $\ket{G_8}= \ket{0000} + \ket{1011} + \ket{1101} + \ket{1110} \equiv \ket{L_{0_{7\oplus\bar{1}}}}$, where $a,b \in \mathbb{C}$ with $\text{Re}(a), \text{Re}(b) \geq 0$, and $\ket{L_{\bullet}}$ refers to the notation of \cite{verstraete_2002}.
The square root of three-tangle for all the states in Eq.~(\ref{eq:rhoslocc}) is given exactly by Eq.~(\ref{eq:entdistance}); if one prefers, it can also be evaluated numerically in any convex decomposition (e.g.~the spectral one), with no optimization required.

We finally note that many entanglement bounds for convex roof extended measures will be tight on the one-root states, because of their special properties. For instance, bounds such as the best separable approximation for two qubits \cite{lewenstein_1998}, the best $W$ approximation for three qubits \cite{acin_2001}, and the generalized best zero-$E$ approximation \cite{rodriques_2014} are based on finding a convex decomposition for an arbitrary state $\rho$ in terms of states with vanishing entanglement and at most one state with non-vanishing entanglement. However, for one-root states, such a decomposition is possible only in one way: that is, into a pair formed by the root $\ket{z}$ and some other state $\ket{\omega}$;  hence the entanglement of $\rho$ is trivially given by  $E(\ket{\omega})$ with the corresponding weight. Additionally, bounds which use methods such as the conjugate gradient \cite{audenaert_2001, rothlisberger_2009} are also guaranteed to converge to the right value. By the Schr\"{o}dinger--Hughston--Jozsa--Wootters theorem \cite{schrodinger_1936, hughston_1993}, any two decompositions for a given density matrix $\rho$ are related by applying  a unitary matrix, therefore a typical instance of a numerical method of this kind calculates the gradient for a given entanglement measure on the unitary manifold and uses it to reach the minimum in the convex roof. For one-root states, however, any choice of the initial decomposition gives the right entanglement value by Eq.~\eqref{affine}, and the value of the gradient of $E$ on the unitary manifold can be verified (numerically) to stay uniformly zero.

In conclusion, we have shown that every polynomial entanglement measure $E$ of degree 2 is affine for any rank-$2$ state $\rho$ for which there is only one pure state $\ket{z}$ in the range of $\rho$ such that $E(\ket{z})=0$. This renders calculating the convex roof of $E$ trivially easy in any such case, as the entanglement of $\rho$  does not depend on its pure-state decomposition.
The method applies to many significant mixed states which did not enjoy known formulae before, as is the case for the three-tangle of the marginals of several classes of four-qubit pure states.

The results of Theorem~\ref{geothm} can be used for  evaluation of various polynomial generalizations of the tangle in four and more qubits \cite{osterloh_2004,gour_2010}  whose states obey the one-root property; explicit instances can be readily constructed in analogy to the ones reported here for two and three qubits. Moreover, the geometric approach presented herein is rather powerful and applicable also to higher-dimensional systems using a generalized Bloch vector approach \cite{kimura_2003}, although the properties of the complex polynomials encountered in the definition of the entanglement measures do not seem to allow for a simple generalization of the concept of one-root states. A possible extension of this work would be to find classes of qudit states with equivalent properties, which might lead us to accomplish an even more comprehensive study of multipartite entanglement.

\begin{acknowledgments}We thank the European Research Council (ERC) Starting Grant GQCOP (Grant No.~637352), for financial support. We acknowledge fruitful discussions with J. Louko, A. Streltsov, A. Winter, W. K. Wootters, and especially K. Macieszczak.\end{acknowledgments}


%

\end{document}